\documentclass[twocolumn,10pt,letter]{IEEEtran}
\hyphenation{op-tical net-works semi-conduc-tor}
\usepackage{longtable}
\usepackage{color}
\usepackage{soul}
\usepackage{amsmath}
\usepackage{cancel}
\usepackage{amssymb}
\usepackage{graphicx}
\usepackage{longtable}
\usepackage{multirow}
\usepackage{stfloats}
\usepackage{array}
\usepackage[labelsep=period]{caption}
\usepackage{setspace}
\usepackage{float}
\usepackage{subfig}
\usepackage{lettrine} 
\usepackage[noadjust]{cite}
\usepackage{balance}
\usepackage[english]{babel}
\usepackage[figurename=Fig.]{caption}
\usepackage{suffix}
\usepackage{mathtools}
\usepackage{url} 
\usepackage{cleveref}
\usepackage{soulutf8}

\newtheorem{proposition}{Proposition}

\newenvironment{proof}{{\textit{Proof:}}}{\hfill$\blacksquare$}
\newcommand{\qed}{\hfill $\blacksquare$}

\begin{document} 
\title{Hardware and Interference Limited Cooperative CR-NOMA Networks under Imperfect SIC and CSI}	
\author{Sultangali~Arzykulov,~\IEEEmembership{Member,~IEEE,} ~Galymzhan~Nauryzbayev,~\IEEEmembership{Member,~IEEE,}\\
~Abdulkadir Celik,~\IEEEmembership{Senior~Member,~IEEE,} and
Ahmed M. Eltawil,~\IEEEmembership{Senior~Member,~IEEE}
}
	\maketitle
	\thispagestyle{empty}
	\begin{abstract}
		The conflation of cognitive radio (CR) and non-orthogonal multiple access (NOMA) concepts is a promising approach to fulfill the massive connectivity goals of future networks given the spectrum scarcity. Accordingly, this letter investigates the outage performance of imperfect cooperative CR-NOMA networks under hardware impairments and interference. Our analysis is involved with the derivation of the end-to-end outage probability (OP) for secondary NOMA users by accounting for imperfect channel state information (CSI), as well as the residual interference caused by successive interference cancellation (SIC) errors and coexisting primary/secondary users. The numerical results validated by Monte Carlo simulations show that CR-NOMA network provides a superior outage performance over orthogonal multiple access. As imperfections become more significant, CR-NOMA is observed to deliver relatively poor outage performance.  
		\end{abstract}
	\begin{IEEEkeywords}
		\emph{Cognitive radio, cooperative non-orthogonal multiple access, outage probability, hardware impairment}. 
	\end{IEEEkeywords}
	\IEEEpeerreviewmaketitle
	\section{Introduction}
\IEEEPARstart{T}{he} ambitious quality-of-service (QoS) demands of future wireless networks poses daunting challenges, especially under the ever-increasing number of devices connected to the Internet \cite{IoT_forecast}. This consequently leads to two problems: spectrum scarcity and interference-limited networks. Spectrum scarcity has been mostly studied in the realm of cognitive radio (CR) networks where unlicensed/secondary users are permitted to operate on spectrum bands licensed to primary users in an opportunistic and non-intrusive manner \cite{Celik2016Green}. Alternatively, non-orthogonal multiple access (NOMA) schemes have also recently received attention as a promising technique to mitigate the inability of orthogonal multiple access (OMA) schemes to support massive connectivity \cite{Ding_assump}. On the transmitter side, the NOMA scheme superposes messages of intended users with distinctive power allocation (PA) weights based on their channel quality. On the receiver side, the intended signal is extracted by decoding the transmitted broadcast message using successive interference cancellation (SIC). 

Therefore, the conflation of CR and NOMA concepts (CR-NOMA) is regarded as a potential solution to the problems mentioned above.  Existing literature on cooperative CR-NOMA networks mostly deals with simple scenarios under ideal cases without paying sufficient attention to practical limitations in terms of channel and hardware impairments \cite{Yu_CR_NOMA,Wang_CR_NOMA,Jia_CR_NOMA,Kumar_CR_NOMA}.  Accordingly, this letter investigates the outage performance of a generic and imperfect CR-NOMA, where non-ideality is modeled by accounting for hardware impairments (HIs), channel state information (CSI) imperfections, and residual interference due to the SIC error propagation. Ensuring that primary traffic is protected, our analysis also tackles the random interference caused by coexisting primary/secondary users. The numerical results validated by Monte Carlo simulations show that the CR-NOMA network provides a superior outage performance over orthogonal multiple access, which degrades with CSI and SIC imperfections.
\section{System Model}
\label{sec:system model}
\begin{figure}[t]
	\centering
	\includegraphics[width=1\columnwidth]{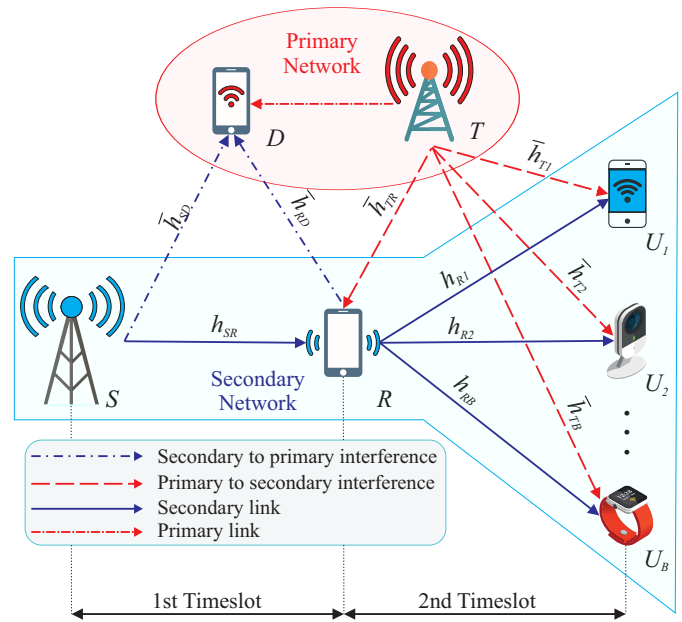}
	\caption{The proposed underlay CR-NOMA network.}
	\label{system_model}
\end{figure}
\subsection{Channel Model}
We consider a downlink CR-NOMA network that consists of primary and secondary networks as illustrated in Fig. \ref{system_model}. The primary network comprises of primary transmitter ($T$) and destination ($D$) nodes. On the other hand, the secondary network consists of a source ($S$),  a relay ($R$) that cooperates with $S$ in a half-duplex decode-and-forward (DF) mode, and $B$ secondary NOMA users. The cooperation occurs in two time slots; The broadcast signal transmitted by $S$ in the first time slot is re-transmitted to $B$ NOMA users in the second time slot. Channel gains among the nodes are modeled as $\bar{h}_i\sim \mathcal{CN}(0,1),~\forall i \in \{SD, SR, RD, R1, \ldots, RB, TR, T1, \ldots, TB\}$. Moreover, the distance between the corresponding nodes and the path-loss exponent are denoted by $d_i$ and $\tau$, respectively. To capture CSI imperfections, we model channel coefficients  using the minimum mean square error (MMSE) channel estimator as $\bar{h}_i = h_i + e_i$,  where $h_i \sim \mathcal{CN}(0,\sigma_{h_i}^2)$ and $e_i \sim  \mathcal{CN}(0,\zeta_i)$ are the estimated channel coefficient and channel estimation error with variance $\sigma_{h_i}^2$ and $\zeta_i$, respectively. The error variance is modeled as $\zeta_i \triangleq \theta \rho^{- \kappa}$ where $\rho = \frac{P}{\sigma^2}$ is the transmitted signal-to-noise ratio (SNR), and $\kappa \geqslant 0 $, $\theta > 0$  \cite{impCSI}. Indeed, this model describes various  CSI acquisition scenarios: a) $\zeta$ is a function of $\rho$ for $\kappa \neq 0$, and b) $\zeta$ is independent of $\rho$ for $\kappa = 0$. Following from the underlay CR paradigm, it is also assumed that the transmit power of a secondary node $j \in \{R,S\}$ is restricted as
\cite{Arzykulov_TCCN} $P_j  \leqslant \min\left(\bar{P}_j, \frac{I_{\text{ITC}, j}}{|\bar{h}^{}_{j D}|^2}\right)$, 
where $\bar{P}_j$ stands for the maximum transmit power at node $j$ and $I_{\text{ITC},j}$ denotes the interference temperature constraint (ITC) at $D$ caused by node $j$. 
\subsection{Transmission Protocol}
In the first time slot of the proposed CR-NOMA relaying model, $S$ broadcasts $\sum_{b=1}^{B}\sqrt{\alpha_b} x_b$ to $R$, where $\alpha_b$ is the PA factor\footnote{Similar to \cite{Ding_QoS}, we allocate the power based on the QoS requirements imposed at secondary NOMA users.} of $U_b$ such that $\alpha_1 >  \ldots > \alpha_b > \ldots > \alpha_B$ and $\sum_{b=1}^{B} \alpha_b = 1$, and $x_b$ is the message dedicated to $U_b$ with $\mathbb{E}(|x_b|^2)=1$. 

Considering CSI imperfections and aggregate distortion noise, the received signal at $R$ can be written as 
\begin{align}
\label{signal at R}
y_{R} =& \left( h^{}_{SR} + e^{}_{SR}\right) \sqrt{\tilde{P}^{}_S} \left( \sum_{b=1}^{B} \sqrt{\alpha^{}_b } x^{}_b +  \eta^{}_{SR} \right) \nonumber\\
&+   \bar{h}^{}_{TR} \sqrt{\tilde{P}^{}_T} \left(x^{}_T + \eta^{}_T\right) + n^{}_R,  
\end{align}
where $\tilde{P}^{}_i = \frac{P_i}{d_{iR}^{\tau}}$, $i \in \{S,T\}$; 
$\eta^{}_{(\cdot)} \sim \mathcal{CN}\left(0, \phi^{2}_{(\cdot)} \right)$ denotes the aggregate distortion noise from transceiver; $\phi^{}_{(\cdot)}$ = $\sqrt{\phi^{2}_{t}+\phi^{2}_{r}}$ is the aggregate HI level from the transmitter and receiver \cite{Stefania};  $n_{(\cdot)} \sim \mathcal{CN}(0,\sigma_{(\cdot)}^2)$ denotes the additive white Gaussian noise (AWGN) term at each receiver node; $P_T$ and $x^{}_T$ stand for the transmit power at $T$ and the message dedicated to $D$, respectively. Then, the instantaneous signal-to-interference-distortion-noise-ratio (SIDNR) to decode $x_j$,  $1\leq j \leq b <  B$, at $R$ can be expressed by
\begin{align}
	\label{gammaRb_SINR}
	\gamma^{}_{R,j} &=\frac{\alpha^{}_j P_S \left| h^{}_{SR}\right| ^2  }{  \mathcal{A} P_S  \left|h^{}_{SR}\right|^2  + \mathcal{C} P_S + \mathcal{D}_R \tilde{P}_T \left|\bar{h}^{}_{T R}\right|^2 + d^{\tau}_{SR} \sigma^2_{R} },
\end{align} 
where $\mathcal{A} = \Omega_j+ \tilde{\Omega}_j  + \phi^{2}_{SR}$;  $\Omega_j = \sum_{n=j+1}^{B} \alpha^{}_n$;  $\tilde{\Omega}_j = \sum_{\tilde{n}=1}^{j-1} \backepsilon^{}_{\tilde{n}} \alpha^{}_{\tilde{n}}$, $0<\backepsilon^{}_{\tilde{n}}<1$, where $\backepsilon^{}_{\tilde{n}}=0$ and $\backepsilon^{}_{\tilde{n}}=1$ indicate the perfection and absence of SIC, respectively; $\mathcal{C} = \zeta^{}_{SR} + \zeta^{}_{SR} \phi^{2}_{SR}$ and $\mathcal{D}_R = d^{\tau}_{SR} d^{-\tau}_{TR}\left(1+ \phi^{2}_{TR}\right) $. Furthermore, by assuming imperfect detection of $x^{}_j$, $R$ decodes the message of user $B$ with the SIDNR of
\begin{align}
	\label{gammaRB_SINR}
	\hspace{-0.2cm}\gamma^{}_{R,B} &=\frac{\alpha^{}_B P_S \left|h^{}_{SR}\right|^2 }{ \mathcal{\tilde{A}}  P_S \left|h^{}_{SR}\right|^2 + \mathcal{C} P_S + \mathcal{D}_R \tilde{P}_T  \left|\bar{h}^{}_{T R}\right|^2 + d^{\tau}_{SR} \sigma^2_{R} },
\end{align} 
where $\mathcal{\tilde{A}}=\left( \tilde{\Omega}_B + \phi^{2}_{SR}\right)$ and $\tilde{\Omega}_B = \sum_{\tilde{n}=1}^{B-1} \backepsilon^{}_{\tilde{n}} \alpha^{}_{\tilde{n}}$.

In the second time slot, $R$ relays the decoded signal $\sum_{b=1}^{B}\sqrt{\beta^{}_b} \tilde{x}^{}_b$ to $B$ NOMA users, where $\beta^{}_b$, with $\sum_{b=1}^{B} \beta^{}_b = 1$, is the PA factor of $U_b$. Hence, the received signal at $U_b$ can be written as
\begin{align}
\label{signal at b}
y^{}_{b} &=  \left( h^{}_{R b} + e^{}_{b}\right) \sqrt{\tilde{P}^{}_R} \left( \sum_{j=1}^{B} \sqrt{\beta^{}_j } x^{}_j +  \eta^{}_{b} \right) \nonumber\\
&~~~+   \bar{h}^{}_{Tb} \sqrt{\tilde{P}^{}_T} \left(x^{}_T + \eta^{}_T\right) + n^{}_b,  
\end{align}
where $\tilde{P}^{}_i = \frac{P_i}{d^{\tau}_{ib}}$, $i \in \{R,T\}$.
Then, we can write the SIDNR for $U_b$ to detect the message of $U_j$ as follows
\begin{align}
\label{gammab_SNDIR}
\gamma^{}_{b,j} &=\frac{\beta^{}_j P_{R} \left| h^{}_{R b}\right| ^2  }{\mathcal{J}_b  P_{R} \left|h^{}_{R b}\right|^2  + \mathcal{G}_b P_R + \mathcal{D}_b \tilde{P}_{T}  \left|\bar{h}^{}_{Tb}\right|^2 + d^{\tau}_{Rb} \sigma^2_{b} },
\end{align} 
where $\beta^{}_1>\beta^{}_{j}>\beta^{}_{b}>\beta^{}_{B}$; $\forall b \in \{1, \ldots, B\}$;  $\mathcal{J}_b = \left( \Omega_j + \tilde{\Omega}_j + \phi^{2}_{b} \right)$; $\Omega_j = \sum_{n=j+1}^{b} \beta^{}_n$; $\tilde{\Omega}_j = \sum_{\tilde{n}=1}^{j-1} \backepsilon_{\tilde{n}} \beta^{}_{\tilde{n}}$; $\mathcal{G}_b = \left(\zeta^{}_b + \zeta^{}_b \phi^{2}_{b}  \right) $ and $\mathcal{D}_b = d^{\tau}_{Rb}d^{-\tau}_{Tb}\left(1+ \phi^{2}_{Tb}\right) $. Finally, after decoding messages of $B-1$ NOMA users, $U_B$ detects its own message with
\begin{align}
\label{gammaB}
\hspace{-0.2cm}\gamma^{}_{B} &=\frac{\beta^{}_B P_{R} \left|h^{}_{R B}\right|^2 }{\mathcal{J}_B  P_{R} \left|h^{}_{R B}\right|^2 + \mathcal{G}_B P_R + \mathcal{D}_B \tilde{P}_T \left|\bar{h}^{}_{T B}\right|^2 + d^{\tau}_{RB} \sigma^2_{B} },
\end{align} 
where $\mathcal{J}_B = \tilde{\Omega}_B + \phi^{2}_{B}$; $\tilde{\Omega}_B = \sum_{\tilde{n}=1}^{B-1} \backepsilon_{\tilde{n}} \beta^{}_{\tilde{n}}$. 
Considering the dual-hop communication, the achievable rate at $U_j$ is calculated as 
\begin{align}
	\label{rate}
	\mathcal{R}_{j} = \frac{1}{2}\log_2\left[1 + \min\left( \gamma^{}_{R,j}, \gamma^{}_{b,j}\right)\right],~1\leq j \leq b \leq  B.
\end{align}
\section{Outage Analysis}
\label{sec:Outage Analysis}
The outage at $U_j$ occurs when the achievable rate at $U_j$ is below a predefined rate threshold $\mathcal{R}_{{\text{th}},j}$. Following from \eqref{rate}, the OP of $U_j$ can be written as
\begin{align}
\label{Poutj}
	P_{{\text{out}},j} &= {\text{Pr}}\left[ \frac{1}{2} \log_2 \left[1+ \min\left(\gamma^{}_{R,j}, \gamma^{}_{b,j} \right)  \right]  < \mathcal{R}_{{\text{th}},j}\right] \nonumber\\
	&= 1 - {\text{Pr}}\left[\min\left(\gamma^{}_{R,j}, \gamma^{}_{b,j} \right) > \psi_j \right] \nonumber\\
	&= F_{\gamma^{}_{R,j}}(\psi_j) + F_{\gamma^{}_{j}}(\psi_j) - F_{\gamma^{}_{R,j}}(\psi_j) F_{\gamma^{}_{j}}(\psi_j),
\end{align} 
where $\psi_j = 2^{2\mathcal{R}_{{\text{th}},j}} - 1$ denotes the predefined SNR threshold at $U_j$. Now, considering the ITC imposed at $D$, we can write the CDF of the RV $\gamma^{}_{R,j}$ as in \eqref{CDF_gammaRj}, where $X = \left| h^{}_{SR}\right| ^2$, $Y = \left| \bar{h}^{}_{SD}\right| ^2$ and $Z=\left| \bar{h}^{}_{PR}\right| ^2$ follow the exponential distribution with parameter $\lambda$; $\mathcal{K}_c = \frac{\mathcal{D}_R \tilde{P}_T \psi_j}{ d \left( \alpha_j - \mathcal{A} \psi_j\right)  }$; $\mathcal{M}_c = \frac{\psi_j d^{\tau}_{SR} \sigma^2_{j} }{d \left( \alpha_j - \mathcal{A} \psi_j\right) }$, with $c=d$,  $c \in \{\Delta, \Upsilon\}$ and $d \in \{\bar{P}_S, I^{}_{\text{ITC}} d^{\tau}_{SD} \}$; $\mathcal{L} = \frac{\mathcal{C}  \psi_{j} }{\alpha_j - \mathcal{A} \psi_j}$ and $\Lambda_S = \frac{I^{}_{\text{ITC}} d^{\tau}_{SD}}{\bar{P}_S}$.
\begin{figure*}[t!]
	\begin{align}
	\label{CDF_gammaRj}
	F_{\gamma_{R,j}} (\psi_j) &= {\text{Pr}}\left[ \frac{\alpha^{}_j \bar{P}_S \left| h^{}_{SR}\right| ^2  }{  \mathcal{A} \bar{P}_S  \left|h^{}_{SR}\right|^2  + \mathcal{C} \bar{P}_S + \mathcal{D}_R \tilde{P}_T \left|\bar{h}^{}_{T R}\right|^2 + d^{\tau}_{SR} \sigma^2_{R} }<\psi_j, ~ \bar{P}_S < \frac{I^{}_{\text{ITC}}d^{\tau}_{SD}}{|\bar{h}^{}_{S D}|^2} \right]\nonumber\\ 
	&~~~+ {\text{Pr}}\left[ \frac{ \frac{\alpha^{}_j I^{}_{\text{ITC}} d^{\tau}_{SD} \left| h^{}_{SR}\right| ^2}{|\bar{h}^{}{S D}|^2}   }{   \frac{\mathcal{A} I^{}_{\text{ITC}} d^{\tau}_{SD} \left| h^{}_{SR}\right| ^2}{|\bar{h}^{}{S D}|^2}  +  \frac{\mathcal{C} I^{}_{\text{ITC}} d^{\tau}_{SD}}{|\bar{h}^{}{S D}|^2} + \mathcal{D}_R \tilde{P}_T \left|\bar{h}^{}_{T R}\right|^2 + d^{\tau}_{SR} \sigma^2_{R} }<\psi_j,~ \bar{P}_S > \frac{I^{}_{\text{ITC}} d^{\tau}_{SD} }{|\bar{h}^{}_{S D}|^2} \right] \nonumber\\
	&= \underbrace{{\text{Pr}}\left[ X<Z \mathcal{K}_\Delta + \mathcal{M}_\Delta + \mathcal{L}, Y<\Lambda_S \right]}_{\text{$\Delta$}}+  \underbrace{{\text{Pr}}\left[ X<ZY \mathcal{K}_\Upsilon+Y \mathcal{M}_\Upsilon + \mathcal{L}, Y>\Lambda_S \right]}_{\text{$\Upsilon$}},
	\end{align}
	\hrulefill
\end{figure*}  
\begin{proposition}
	\label{proposition1}
	The CDF of $\gamma_{R,j}$ can be derived in its closed-form as
	\begin{align}
	\label{CDF_gammaRj_final}
F_{\gamma_{R,j}}& (\psi_j) =	1 - \frac{\lambda_z e^{-\lambda_x \left(  \mathcal{M}_\Delta + \mathcal{L} \right) }}{\lambda_z + \lambda_x \mathcal{K}_\Delta }  \left(1 - e^{-\lambda_y \Lambda_S} \right)\nonumber\\
& + \frac{\lambda_y \lambda_z \text{Ei} \left[ -\mu^{}_S \xi^{}_S\right]      }{\lambda_x \mathcal{K}_\Upsilon} e^{-\Lambda_S \left(\lambda_y + \lambda_x \mathcal{M}_\Upsilon \right) -\lambda_x \mathcal{L}  + \mu^{}_S \xi^{}_S }. 
	\end{align}
\end{proposition}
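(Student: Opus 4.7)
The plan is to start from the two-term decomposition $F_{\gamma_{R,j}}(\psi_j)=\Delta+\Upsilon$ already exhibited in \eqref{CDF_gammaRj}, and evaluate each probability separately using the fact that $X=|h_{SR}|^2$, $Y=|\bar h_{SD}|^2$, $Z=|\bar h_{TR}|^2$ are mutually independent exponentials. In both pieces the dependence on $X$ is of the form $X<g(Y,Z)+\mathcal L$, so I will always condition on the pair $(Y,Z)$, use $\Pr[X<a\mid a]=1-e^{-\lambda_x a}$, and then integrate the remaining joint density of $(Y,Z)$ over the indicated region in the $Y$-axis.

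For the term $\Delta$ the event $\{Y<\Lambda_S\}$ is independent of the $(X,Z)$-event, so $\Delta$ factors into $\Pr[Y<\Lambda_S]\cdot \Pr[X<Z\mathcal K_\Delta+\mathcal M_\Delta+\mathcal L]$. The first factor is immediately $1-e^{-\lambda_y\Lambda_S}$, and the second one follows from the standard moment-generating-function-type integral $\int_0^\infty \lambda_z e^{-\lambda_z z}e^{-\lambda_x\mathcal K_\Delta z}\,dz=\lambda_z/(\lambda_z+\lambda_x\mathcal K_\Delta)$, yielding $1-\frac{\lambda_z}{\lambda_z+\lambda_x\mathcal K_\Delta}e^{-\lambda_x(\mathcal M_\Delta+\mathcal L)}$. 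This is the easy piece.

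The term $\Upsilon$ is the technical core because $X$, $Y$ and $Z$ are all coupled through the product $ZY\mathcal K_\Upsilon$. I will first carry out the $z$-integral for fixed $y$, again using the exponential MGF, to obtain an expression of the form $\int_{\Lambda_S}^\infty \lambda_y e^{-\lambda_y y}\bigl(1-\tfrac{\lambda_z}{\lambda_z+\lambda_x y\mathcal K_\Upsilon}e^{-\lambda_x(y\mathcal M_\Upsilon+\mathcal L)}\bigr)dy$. The $1$-part contributes $e^{-\lambda_y\Lambda_S}$ trivially; the remaining $y$-integral, however, has a rational denominator $\lambda_z+\lambda_x y\mathcal K_\Upsilon$ multiplying an exponential and does not admit an elementary antiderivative. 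This is the main obstacle.

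To clear that obstacle I will perform the linear substitution $u=\lambda_z+\lambda_x y\mathcal K_\Upsilon$, which converts the integrand into $e^{-\mu_S u}/u$ up to a constant prefactor, where $\mu_S\triangleq(\lambda_y+\lambda_x\mathcal M_\Upsilon)/(\lambda_x\mathcal K_\Upsilon)$. The lower limit then becomes $\xi_S\triangleq \lambda_z+\lambda_x\Lambda_S\mathcal K_\Upsilon$, and the integral is recognised as the exponential integral $\int_{\xi_S}^\infty e^{-\mu_S u}/u\,du=-\mathrm{Ei}(-\mu_S\xi_S)$. Collecting the accumulated exponential prefactors (the $e^{-\lambda_x\mathcal L}$ from the inner integral, the $e^{\lambda_z\mu_S}$ produced by the shift in $u$, and the factor $e^{-\Lambda_S(\lambda_y+\lambda_x\mathcal M_\Upsilon)}$ coming from the lower limit) gives the compact $\Upsilon$-contribution $e^{-\lambda_y\Lambda_S}+\frac{\lambda_y\lambda_z}{\lambda_x\mathcal K_\Upsilon}\mathrm{Ei}(-\mu_S\xi_S)\,e^{-\lambda_x\mathcal L+\mu_S\xi_S-\Lambda_S(\lambda_y+\lambda_x\mathcal M_\Upsilon)}$. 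Finally, adding $\Delta$ and $\Upsilon$, the $\pm e^{-\lambda_y\Lambda_S}$ terms combine with the factored $\Delta$ expression to produce the $1-\frac{\lambda_z e^{-\lambda_x(\mathcal M_\Delta+\mathcal L)}}{\lambda_z+\lambda_x\mathcal K_\Delta}(1-e^{-\lambda_y\Lambda_S})$ leading block, yielding \eqref{CDF_gammaRj_final}.
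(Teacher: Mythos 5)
Your proof is correct and takes essentially the same route as the paper's Appendix~A: the same $\Delta/\Upsilon$ split of \eqref{CDF_gammaRj}, the same exponential-MGF factorization for $\Delta$, and the same reduction of $\Upsilon$ to an exponential-integral form. The only differences are cosmetic: you integrate out $Z$ before $Y$ and reach $\text{Ei}[\cdot]$ by a linear substitution, whereas the paper integrates $Y$ first and invokes \cite[Eq.~(3.352.4)]{gradshteyn2007}, and your $\mu_S$ and $\xi_S$ are the paper's $\xi_S$ and $\mu_S$ interchanged, which is immaterial since only the product $\mu_S\xi_S$ appears in \eqref{CDF_gammaRj_final} (one could also note explicitly, as the paper does, the side condition $\psi_j<\alpha_j/\mathcal{A}$ under which the constants are positive).
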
~~~\begin{proof}
	See Appendix \ref{Appendix 1}.
\end{proof}

Following from \eqref{gammab_SNDIR}, the CDF of $\gamma^{}_{b,j}$ can be described as 
\begin{align}
\label{CDF_gammabj}
 F_{\gamma^{}_{b,j}} (\psi_j) &= \underbrace{{\text{Pr}}\left[ Q<W \mathcal{S}_\Theta + \mathcal{O}_\Theta + \mathcal{T}, V<\Lambda_R \right]}_{\text{$\Theta$}}\nonumber\\
 &+  \underbrace{{\text{Pr}}\left[ Q<W V \mathcal{S}_\Phi + V \mathcal{O}_\Phi + \mathcal{T}, V>\Lambda_R \right]}_{\text{$\Phi$}},
\end{align}
where $Q = |h_{R b}|^2$; $V = |\bar{h}^{}_{R D}|^2$; $W = |\bar{h}^{}_{T b}|^2$; $\mathcal{S}_k = \frac{\mathcal{D}_b \tilde{P}_T \psi_j}{l \left( \beta^{}_b - \mathcal{J}_b \psi_j\right) }$; $\mathcal{O}_k = \frac{d^{\tau}_{Rb}\sigma^2_b \psi_j}{l \left( \beta^{}_b - \mathcal{J}_b \psi_j\right) }$, with $k=l$, $k \in \{\Theta, \Phi\}$ and $l \in \{\bar{P}^{}_R, I^{}_{\text{ITC}} d^{\tau}_{RD}\}$; $\mathcal{T} = \frac{\mathcal{G}_b \psi_j}{\beta^{}_b - \mathcal{J}_b \psi_j}$ and $\Lambda_R = \frac{I^{}_{\text{ITC}} d^{\tau}_{RD}}{\bar{P}^{}_R}$.
\begin{proposition}
	\label{proposition2}
The closed-from expression for the CDF of $\gamma^{}_{b,j}$ can be written as
\begin{align}
\label{CDF_gammabj_final}
F_{\gamma^{}_{b,j}}& (\psi_j) =1 - \frac{\lambda_w e^{-\lambda_q \left(  \mathcal{O}_\Theta + \mathcal{T} \right) }}{\lambda_w + \lambda_q \mathcal{S}_\Theta }  \left(1 - e^{-\lambda_v \Lambda_R} \right)\nonumber\\
& + \frac{\lambda_v \lambda_w \text{Ei} \left[ -\mu^{}_R \xi^{}_R\right]      }{\lambda_q \mathcal{S}_\Phi} e^{-\Lambda_R \left(\lambda_v + \lambda_q \mathcal{O}_\Phi \right) -\lambda_q \mathcal{T}  + \mu^{}_R \xi^{}_R }. 
\end{align}
\end{proposition}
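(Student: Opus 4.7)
The plan is to mirror the derivation of Proposition~\ref{proposition1} step by step, since the decomposition of $F_{\gamma_{b,j}}(\psi_j)$ in \eqref{CDF_gammabj} into the two events $\Theta$ and $\Phi$ has exactly the same algebraic structure as the one for $F_{\gamma_{R,j}}(\psi_j)$ in \eqref{CDF_gammaRj}, only with the variable renaming $(X,Y,Z)\to(Q,V,W)$, $(\mathcal{K}_\bullet,\mathcal{M}_\bullet,\mathcal{L})\to(\mathcal{S}_\bullet,\mathcal{O}_\bullet,\mathcal{T})$, $\Lambda_S\to\Lambda_R$, and $(\mu_S,\xi_S)\to(\mu_R,\xi_R)$. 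Since $Q$, $V$, and $W$ are mutually independent exponential RVs with rates $\lambda_q$, $\lambda_v$, $\lambda_w$, respectively, I would reuse the integration template from Appendix~\ref{Appendix 1} and only verify that the substitutions yield \eqref{CDF_gammabj_final}.

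First, I would handle the $\Theta$ term. Because $V$ is independent of $(Q,W)$, the joint probability factorises as $\Theta = \Pr[Q < W\mathcal{S}_\Theta + \mathcal{O}_\Theta + \mathcal{T}]\cdot\Pr[V<\Lambda_R]$. The second factor is immediately $1-e^{-\lambda_v\Lambda_R}$. For the first factor, I would condition on $W=w$, use $F_Q(x)=1-e^{-\lambda_q x}$, and then take the expectation over $W\sim\text{Exp}(\lambda_w)$, which gives the standard Laplace-transform evaluation $1-\frac{\lambda_w}{\lambda_w+\lambda_q\mathcal{S}_\Theta}e^{-\lambda_q(\mathcal{O}_\Theta+\mathcal{T})}$. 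Multiplying the two pieces reproduces the non-$\text{Ei}$ contribution in \eqref{CDF_gammabj_final}.

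The step I expect to carry the main technical weight is the evaluation of $\Phi$. Here I would condition on $V=v>\Lambda_R$; the inner probability $\Pr[Q<Wv\mathcal{S}_\Phi+v\mathcal{O}_\Phi+\mathcal{T}\mid V=v]$ is evaluated exactly as above but with $w$-coefficient $v\mathcal{S}_\Phi$, yielding $1-\frac{\lambda_w}{\lambda_w+\lambda_q v\mathcal{S}_\Phi}e^{-\lambda_q(v\mathcal{O}_\Phi+\mathcal{T})}$. Integrating this against $\lambda_v e^{-\lambda_v v}\,dv$ on $(\Lambda_R,\infty)$ leaves a single $v$-integral of the form $\int_{\Lambda_R}^\infty\frac{e^{-av}}{b+cv}\,dv$ with $a=\lambda_v+\lambda_q\mathcal{O}_\Phi$, $b=\lambda_w$, $c=\lambda_q\mathcal{S}_\Phi$. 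The substitution $t=a(b+cv)/c$ converts this into $\frac{e^{ab/c}}{c}\int_{a\Lambda_R+ab/c}^{\infty}\frac{e^{-t}}{t}\,dt$, which is exactly $-\frac{e^{ab/c}}{c}\text{Ei}(-a\Lambda_R-ab/c)$ by the definition of the exponential integral. The obstacle here is purely bookkeeping: one must package the exponent and the $\text{Ei}$ argument consistently.

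Finally, I would identify $\mu_R\triangleq\lambda_v+\lambda_q\mathcal{O}_\Phi$ and $\xi_R\triangleq\Lambda_R+\lambda_w/(\lambda_q\mathcal{S}_\Phi)$, so that $\mu_R\xi_R=\Lambda_R(\lambda_v+\lambda_q\mathcal{O}_\Phi)+\lambda_w(\lambda_v+\lambda_q\mathcal{O}_\Phi)/(\lambda_q\mathcal{S}_\Phi)$, which collapses the above expression into the $\text{Ei}[-\mu_R\xi_R]\,e^{\mu_R\xi_R}$ factor appearing in \eqref{CDF_gammabj_final}. Summing $\Theta$ and $\Phi$, the residual $e^{-\lambda_v\Lambda_R}$ pieces cancel against the $1-e^{-\lambda_v\Lambda_R}$ multiplier, producing the claimed closed form. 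No separate convergence analysis is needed because all densities are exponential and the integrand decays exponentially at infinity; the Ei representation is well-defined provided $\mu_R\xi_R>0$, which follows from positivity of $\lambda_v,\lambda_q,\lambda_w,\mathcal{S}_\Phi,\mathcal{O}_\Phi,\Lambda_R$ under the operating assumption $\beta_b>\mathcal{J}_b\psi_j$ implicit in \eqref{CDF_gammabj}.
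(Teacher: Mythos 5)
Your proposal is correct and follows essentially the same route as the paper's Appendix \ref{Appendix 2}: the same split of \eqref{CDF_gammabj} into $\Theta$ and $\Phi$, the same factorization of $\Theta$ via independence of $V$ (reproducing \eqref{Theta}), and a reduction of $\Phi$ to an exponential-integral term. The only genuine difference is the order of integration in $\Phi$: the paper fixes $W=w$, integrates $v$ over $(\Lambda_R,\infty)$ first (its $\Phi_1$ in \eqref{Phi_1}), and is then left with a complete integral $\int_0^\infty e^{-\mu_R w}/(w+\xi_R)\,\mathrm{d}w$ evaluated by \cite[Eq.~(3.352.4)]{gradshteyn2007}, whereas you integrate $W$ out first and are left with the lower-truncated integral $\int_{\Lambda_R}^\infty e^{-av}/(b+cv)\,\mathrm{d}v$, which your substitution correctly turns into $-\frac{e^{ab/c}}{c}\,\text{Ei}\bigl(-a\Lambda_R-\tfrac{ab}{c}\bigr)$ (the incomplete counterpart of the tabulated identity the paper uses). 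The two orderings agree because $a\Lambda_R+ab/c=\bigl(\lambda_w+\lambda_q\Lambda_R\mathcal{S}_\Phi\bigr)\bigl(\tfrac{\mathcal{O}_\Phi}{\mathcal{S}_\Phi}+\tfrac{\lambda_v}{\lambda_q\mathcal{S}_\Phi}\bigr)$, i.e., exactly the paper's product $\mu_R\xi_R$, and your leading terms combine as in the paper ($\Theta+\Phi$ with the $e^{-\lambda_v\Lambda_R}$ pieces merging into the $1-(\cdot)$ structure). One bookkeeping caveat: your individual definitions $\mu_R=\lambda_v+\lambda_q\mathcal{O}_\Phi$ and $\xi_R=\Lambda_R+\lambda_w/(\lambda_q\mathcal{S}_\Phi)$ differ from the paper's $\mu_R=\lambda_w+\lambda_q\Lambda_R\mathcal{S}_\Phi$ and $\xi_R=\mathcal{O}_\Phi/\mathcal{S}_\Phi+\lambda_v/(\lambda_q\mathcal{S}_\Phi)$; only the product coincides, which is all that enters \eqref{CDF_gammabj_final}, so the closed form is unaffected, but you should adopt the paper's definitions if $\mu_R$ or $\xi_R$ is ever quoted separately.
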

\begin{proof}
See Appendix \ref{Appendix 2}.
\end{proof}

Lastly, the exact OP of $U_j$ can be derived after substituting  \eqref{CDF_gammaRj_final} and \eqref{CDF_gammabj_final} into \eqref{Poutj}. 
Similarly, the OP of the $U_M$ can be derived by using \eqref{gammaRb_SINR} and \eqref{gammaB} and following the same procedure in obtaining the OP of $U_j$. Then, following the similar approach as in Appendix \ref{Appendix 1}, the CDFs of the RV $\gamma^{}_{R,B}$ can be derived in closed-form as
\begin{align}
\label{CDF_gammaRB_final}
F_{\gamma_{R,B}}& (\psi_B) =	1 - \frac{\lambda_z e^{-\lambda_x \left(  \mathcal{H}_\Psi + \mathcal{I} \right) }}{\lambda_z + \lambda_x \mathcal{E}_\Psi }  \left(1 - e^{-\lambda_y \Lambda_S} \right)\nonumber\\
& + \frac{\lambda_y \lambda_z \text{Ei} \left[ -\bar{\mu}^{}_S \bar{\xi}^{}_S\right]      }{\lambda_x \mathcal{E}_\Pi} e^{-\Lambda_S \left(\lambda_y + \lambda_x \mathcal{H}_\Pi \right) -\lambda_x \mathcal{I}  + \bar{\mu}^{}_S \bar{\xi}^{}_S }, 
\end{align}
where $\mathcal{E}_m = \frac{\mathcal{D}_R \tilde{P}_T \psi_B}{ d \left( \alpha_B - \mathcal{\tilde{A}} \psi_B\right)  }$; $\mathcal{H}_m = \frac{ d^{\tau}_{SR} \sigma^2_{B} \psi_B }{d \left( \alpha_B - \mathcal{\tilde{A}} \psi_B\right) }$, with $m=d$,  $m \in \{\Psi, \Pi \}$ and $d \in \{\bar{P}_S, I^{}_{\text{ITC}} d^{\tau}_{SD} \}$; $\mathcal{I} = \frac{\mathcal{C \psi_B}}{\alpha_B - \mathcal{\tilde{A}} \psi_j}$; $\bar{\mu}^{}_S = \lambda_z +\lambda_x \Lambda_S \mathcal{E}^{}_\Pi$;  $\bar{\xi}^{}_S = \frac{ \mathcal{H}^{}_\Pi}{ \mathcal{E}^{}_\Pi} + \frac{\lambda_y}{\lambda_x \mathcal{E}^{}_\Pi}$. Moreover, $\psi^{}_B <\frac{\alpha^{}_B}{\mathcal{\tilde{A}}}$, otherwise, $F_{\gamma^{}_{R,B}} (\psi^{}_B) \sim 1$.
Similarly, the CDF of $\gamma^{}_{B}$ is derived as
\begin{align}
\label{CDF_gammaB_final}
F_{\gamma^{}_{B}}& (\psi_B) =1 - \frac{\lambda_{\bar{w}} e^{-\lambda_q \left(  \mathcal{\bar{O}}_\Xi + \mathcal{\bar{T}} \right) }}{\lambda_{\bar{w}} + \lambda_{\bar{q}} \mathcal{\bar{S}}_\Xi }  \left(1 - e^{-\lambda_{\bar{v}} \Lambda_R} \right)\nonumber\\
& + \frac{\lambda_{\bar{v}} \lambda_{\bar{w}} \text{Ei} \left[ -\mu^{}_R \xi^{}_R\right]      }{\lambda_{\bar{q}} \mathcal{\bar{S}}_\aleph} e^{-\Lambda_R \left(\lambda_{\bar{v}} + \lambda_{\bar{q}} \mathcal{\bar{O}}_\aleph \right) -\lambda_{\bar{q}} \mathcal{\bar{T}}  + \bar{\mu}^{}_R \bar{\xi}^{}_R }, 
\end{align}
where $\bar{Q} = |h_{R B}|^2$; $\bar{W} = |\bar{h}^{}_{T B}|^2$; $\bar{\mathcal{S}}_r = \frac{\mathcal{\bar{D}}_B \tilde{P}_T \psi_B}{n \left( \beta^{}_B - \mathcal{\bar{J}}_B \psi_B \right) }$; $\mathcal{\bar{O}}_r = \frac{d^{\tau}_{RB}\sigma^2_B \psi_B}{n \left( \beta^{}_B - \mathcal{\bar{J}}_B \psi_B\right) }$, with $r=n$, $r \in \{\Xi, \aleph\}$ and $n \in \{\bar{P}^{}_R, I^{}_{\text{ITC}} d^{\tau}_{RD}\}$; $\mathcal{\bar{T}} = \frac{\mathcal{\bar{G}}_B \psi_B}{\beta^{}_B - \mathcal{\bar{J}}_B \psi_B}$;  	$\bar{\mu}^{}_R = \lambda_{\bar{w}} +\lambda_{\bar{q}} \Lambda_R \mathcal{\bar{S}}^{}_\aleph$ and $\bar{\xi}^{}_R = \frac{ \mathcal{\bar{O}}^{}_\aleph}{ \mathcal{\bar{S}}^{}_\aleph} + \frac{\lambda_{\bar{v}}}{\lambda_{\bar{q}} \mathcal{\bar{S}}^{}_\aleph}$. Notice that $\psi^{}_B <\frac{\beta^{}_B}{\mathcal{\bar{J}}_B}$, otherwise, $F_{\gamma^{}_{B}} (\psi^{}_B) \sim 1$. Finally, the exact OP of $U_B$ can be derived by using  \eqref{CDF_gammaRB_final} and \eqref{CDF_gammaB_final}. 
\begin{figure}[!t]
	\centering
	\includegraphics[width=1\columnwidth]{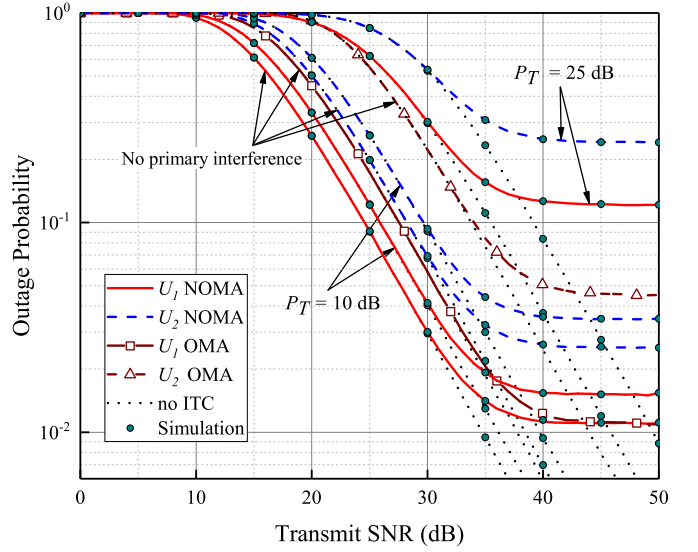}
	\caption{The OP versus the transmit SNR for NOMA and OMA users with $I_{\text{ITC}} = 20$ dB, $\phi = 0$, $\backepsilon = 0$ and $\zeta = 0$.}
	\label{fig2}
\end{figure}
\section{Numerical Results}
\label{sec:numerical results}
This section discusses the numerical results and validates that all theoretical analyses precisely match with Monte-Carlo simulations. We assume two secondary NOMA users\footnote{In practice, it may not be feasible to consider many NOMA users due to the complexity and latency of SIC receivers, which increases non-linearly with the increase in the number of users \cite{Ding_assump}. In this work, SIC complexity especially becomes more significant because of the SIC error propagation.}, i.e.,  $U_1$ and $U_2$ with the following system settings; the same transmit power levels at $S$ and $R$, i.e., $P = P_S = P_R$; $\alpha_1 = \beta_1  = 0.8$; $\alpha_2 = \beta_2 = 0.2$; $\mathcal{R}_1 = 1$ bps;  $\mathcal{R}_2 = 1.5$ bps; $d_{S R}=d_{R 1}=d_{R 2}=d$; $d_{SD} = d_{RD} = d_{TR} = d_{T1} = d_{T2} = 3d$, where $d$ is assumed to be unity; $\tau = 3$. 

Fig. \ref{fig2} compares the OP of users operating on OMA and NOMA. For the sake of a fair comparison, the  QoS demands of cooperative OMA is set as two-fold of that used for cooperative NOMA. To demonstrate the ITC impacts on the OP of SUs, we consider the asymptotic case, where $D$ does not impose ITC, i.e., $I_{\text{ITC}} = \infty$. Fig. \ref{fig2} shows that $U_1$ achieves a lower OP than $U_2$ since $U_1$ is assigned with a lower rate and a higher PA factor. Also, notice that both NOMA users achieve a significantly lower OP than corresponding OMA modes. Additionally, we can observe that the ITC imposed at $D$, i.e., $I_{\text{ITC}}$, results in the saturation of OP curves. This implies that secondary transmitters cannot increase their transmission power above the ITC level in order not to cause harmful interference to $D$. A noteworthy observation is that an increase in primary interference level results in the outage performance deterioration of SUs. For example, at transmit SNR of $20$ dB, $U_1$ obtains the OP of $0.09$, $0.12$ and $0.62$ without primary interference, $P_T = 10$ dB and $P_T = 25$ dB, respectively. 

\begin{figure}[!t]
	\centering
	\includegraphics[width=1\columnwidth]{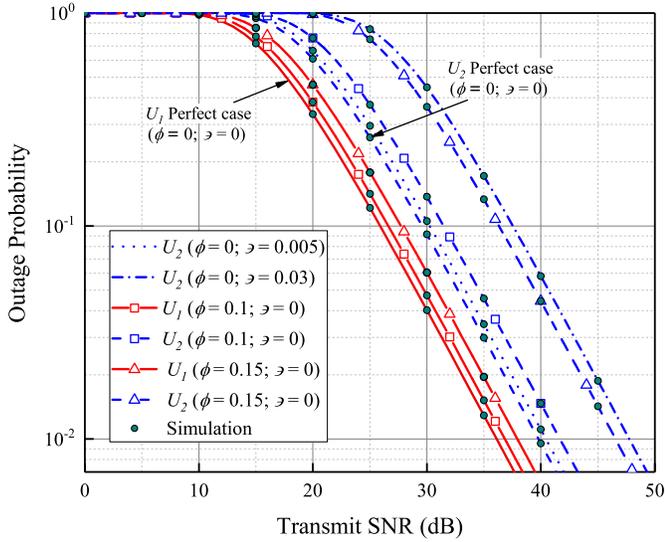}
	\caption{The OP versus the transmit SNR for NOMA users with $P_T = 10$ dB, $I_{\text{ITC}} = \infty$ and $\zeta = 0$.}
	\label{fig3}
\end{figure}

Fig. \ref{fig3} presents the impact of hardware and SIC imperfections on the OP of NOMA users considering perfect CSI ($\zeta = 0$). Here, we consider two imperfect SIC scenarios with $\backepsilon~= 0.005$ and $\backepsilon~= 0.03$. It is important to note that the higher level of SIC imperfection degrades the OP of NOMA users by causing full outage at intolerable imperfect SIC levels. For example, for the proposed system model with $\phi = 0$, the tolerable imperfect SIC level can be calculated from  $\backepsilon~< \frac{\alpha^{}_2 - \phi^2}{\alpha_1  \psi^{}_2}$ as $\backepsilon~< 0.035$. Therefore, the plot shows that the imperfect SIC degrades the outage performance of $U_2$. For instance, at $30$ dB transmit SNR, $U_2$ obtains the OP of $0.105$ and $0.45$ when $\backepsilon~= 0.005$ and $\backepsilon~= 0.03$, respectively, while the OP for perfect SIC is $0.09$. Additionally, we set two different HI levels as $\phi = 0.1$ and $\phi = 0.15$ to show the effect of HIs on the system performance. It is obvious that both NOMA users demonstrate better performance for the lower level of HI, as  expected. A specific observation is that $U_2$ is more sensitive to the distortion noise than $U_1$. For example, at $30$ dB transmit SNR and $\phi = 0.15$, the OP of $U_1$ and $U_2$ degrades for $0.025$ and $0.3$, accordingly. Moreover, after comparing the outage performance of NOMA and OMA users, we note that, even in hardware limited scenario, the NOMA model still outperforms the OMA one. In addition, it is also noticed that the impact of HIs is more effective on the OMA user. For instance, when $\phi = 0.15$, the OP of NOMA user 2 degrades for the value of about $0.3$, while the OP of OMA user 2 declares a full outage at all SNRs levels. 
\begin{figure}[!t]
	\centering
	\includegraphics[width=1\columnwidth]{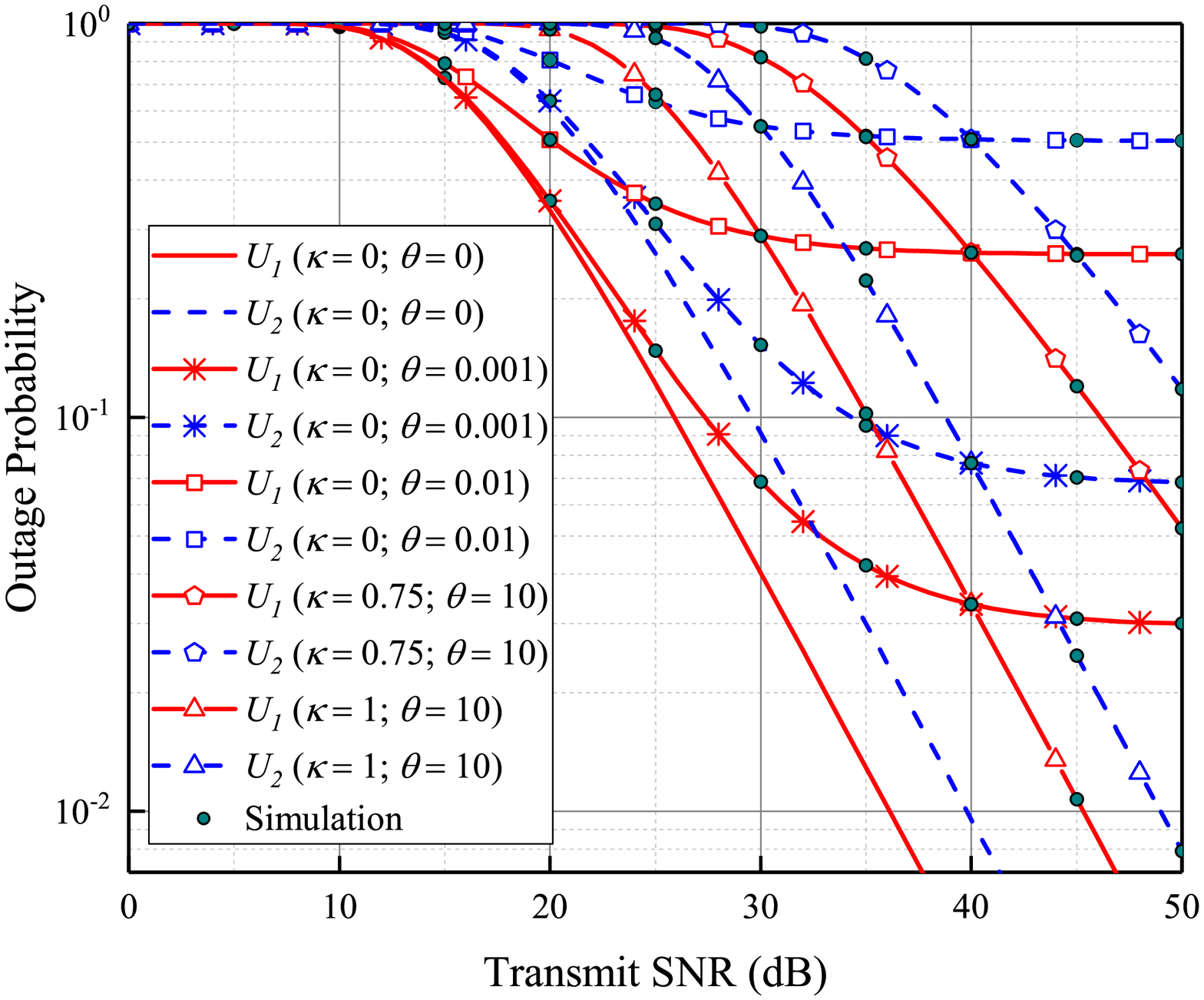}
	\caption{The OP versus the transmit SNR for NOMA users with $P_T = 10$ dB, $I_{\text{ITC}} = \infty$ and $\phi = \backepsilon~= 0$.}
	\label{fig4}
\end{figure}

In Fig. \ref{fig4}, for the sake of figures clarity, we show the impact of the channel error variance on the outage performance of only $U_1$ considering NOMA and OMA models.  We set the system parameters as $I_{\text{ITC}}  = \infty$, $P_T = 10$ dB, $\phi = 0$ and $\backepsilon~= 0$. It is observed from results that the NOMA user obtains better OP comparing with OMA one for all channel uncertainty scenarios. When $\kappa = 0$, the channel error variance becomes SNR-independent and increasing transmit SNR provides no advantage. However, the outage performance degrades by increasing  $\theta$. For instance, the OP of NOMA and OMA users saturate after $40$ dB and $30$ dB when $\theta = 0.001$ and $\theta = 0.01$, respectively. One observation is that, when $\theta$ is small, it does not cause considerable impact on the OP at lower SNR values. This reason is, when $\theta$ tends to zero, the channel estimation approaches to the perfect CSI. On the other hand, when $\theta = 0.1$, both NOMA and OMA users declare an outage at all SNR values, which means that the considered channel uncertainty is intolerable for these users. When $\kappa \neq 0$, OP saturation is not noticed as channel error model becomes SNR-dependent, and the increase of $\kappa$ results in an improvement of the outage performance as the channel error is inversely proportional to the SNR. For example, when $\kappa=1.5$ and $\theta=10$, we can see that the impact of channel error decreases by increasing the SNR level and outage curves of both NOMA and OMA users approach the performance the perfect CSI at high SNRs.

\section{Conclusion}
\label{sec:Conclusion}
This letter analyzed the performance of the downlink underlay CR-NOMA DF-based relaying network considering hardware, CSI, and SIC imperfections. Closed-form analytical expressions for the end-to-end OP of NOMA SUs were derived considering primary and secondary interference. Moreover, the proposed NOMA system model obtained better OP results compared to the OMA one, which is considered as a benchmark model. Finally, the accurateness of the derived analytical expressions were verified by Monte Carlo simulations. In the future, the considered system model can be extended by considering the device-to-device mmWave communication. 
\balance 
\appendices
\numberwithin{equation}{section}
	\section{Proof of Proposition 1}
	\label{Appendix 1}
The term $\Delta$ in \eqref{CDF_gammaRj} can be derived as
	\begin{align}
		\label{Delta}
	\Delta =& \int_{z=0}^{\infty} f_Z(z) \int_{x=0}^{z \mathcal{K}_\Delta + \mathcal{M}_\Delta + \mathcal{L}} f_X(x){\text d}x {\text d}z  \int_{y=0}^{\Lambda_S} f_Y(y){\text d}y\nonumber\\
	=& \left(1 - e^{-\lambda_y \Lambda_S} \right) \int_{z=0}^{\infty} \lambda_z e^{-\lambda_z z} \left(1 - e^{-\lambda_x\left(z \mathcal{K}_\Delta + \mathcal{M}_\Delta + \mathcal{L} \right)} \right)  \text{d}z\nonumber\\
	=& \left(1 - e^{-\lambda_y \Lambda_S} \right) \left(1- \frac{\lambda_z e^{-\lambda_x \left(\mathcal{M}_\Delta + \mathcal{L} \right) }}{\lambda_z + \lambda_x \mathcal{K}_\Delta } \right). 
	\end{align}
		
	Then, the term $\Upsilon$ in \eqref{CDF_gammaRj} can be rewritten as follows
	\begin{align}
		\label{Upsilon}
		\Upsilon&= \int_{z=0}^{\infty} f_Z(z) \underbrace{\int_{y=\Lambda_S}^{\infty} \int_{x=0}^{zy \mathcal{K}_\Upsilon + y \mathcal{M}_\Upsilon + \mathcal{L}} \hspace{-1.2cm} f_Y(y) f_X(x){\text d}x {\text d}y}_{\Upsilon_1}  {\text d} z,
	\end{align}
	while the term $\Upsilon_1$ in \eqref{Upsilon} can be calculated by
	\begin{align}
		\label{Upsilon_1}
		\Upsilon_1 &= \int_{y=\Lambda_S}^{\infty} \lambda_y e^{-\lambda_y y} \left(1 - e^{-\lambda_x \left(zy \mathcal{K}_\Upsilon + y \mathcal{M}_\Upsilon + \mathcal{L} \right)} \right)  \text{d}y\nonumber\\
		&= e^{-\lambda_y \Lambda_S} - \frac{\lambda_y e^{-\lambda_x \mathcal{L}}~ e^{-\Lambda_S \left(z \lambda_x  \mathcal{K}_\Upsilon + \lambda_x \mathcal{M}_\Delta+\lambda_y \right) } }{z \lambda_x  \mathcal{K}_\Upsilon + \lambda_x \mathcal{M}_\Delta+\lambda_y }.
	\end{align}
Then, by inserting \eqref{Upsilon_1} into \eqref{Upsilon}, we can rewrite $\Upsilon$ by
	\begin{align}
		\label{Upsilon_prefinal}
		\Upsilon =& \int_{y=0}^{\infty} \lambda_z e^{-\lambda_z z}~ e^{-\lambda_y \Lambda_S} {\text d}z - \frac{\lambda_y \lambda_z e^{-\lambda_x \mathcal{L} - \lambda_x \Lambda_S \mathcal{M}_\Upsilon - \lambda_y \Lambda_S }}{\lambda_x \mathcal{K}^{}_\Upsilon } \nonumber\\
		&\times   \int_{y=0}^{\infty} \frac{e^{-z \left(\lambda_z +\lambda_x \Lambda_S \mathcal{K}^{}_\Upsilon  \right)    } }{z + \frac{ \mathcal{M}^{}_\Upsilon}{ \mathcal{K}^{}_\Upsilon} + \frac{\lambda_y}{\lambda_x \mathcal{K}^{}_\Upsilon}}.
\end{align}		
Now, by using \cite[Eq. (3.352.4)]{gradshteyn2007}, the term $\Upsilon$ can be derived in a closed-form as
	\begin{align}
		\label{Upsilon_final}
		\Upsilon &= e^{-\lambda_y \Lambda_S} + \frac{\lambda_y \lambda_z e^{-\lambda_x \left(  \mathcal{L} + \Lambda_S \mathcal{M}_\Upsilon\right)  - \lambda_y \Lambda_S }}{\lambda_x \mathcal{K}^{}_\Upsilon }  e^{\mu^{}_S \xi^{}_S} \text{Ei} \left[- \mu^{}_S \xi^{}_S  \right], 
	\end{align}
	where $\mu^{}_S = \lambda_z +\lambda_x \Lambda_S \mathcal{K}^{}_\Upsilon$; $\xi^{}_S = \frac{ \mathcal{M}^{}_\Upsilon}{ \mathcal{K}^{}_\Upsilon} + \frac{\lambda_y}{\lambda_x \mathcal{K}^{}_\Upsilon}$ and $\text{Ei} [\cdot]$ is the exponential integral function.   

Finally, by inserting \eqref{Delta} and \eqref{Upsilon_final} into \eqref{CDF_gammaRj}, the closed-form expression for the CDF of $\gamma_{R,j}$ can be written as in \eqref{CDF_gammaRj_final}, where $\psi^{}_j <\frac{\alpha^{}_j}{\mathcal{A}}$, otherwise, $F_{\gamma^{}_{R,j}} (\psi^{}_j) \sim 1$. \qed
\section{Proof of Proposition 2}
\label{Appendix 2}
The term $\Theta$ in \eqref{CDF_gammabj} can be further extended as
{\allowdisplaybreaks \begin{align}
\label{Theta}
&\Theta = \int_{w=0}^{\infty} f_W(w) \int_{q=0}^{w  \mathcal{S}_\Theta + \mathcal{O}_\Theta + \mathcal{T}} \hspace{-1.2cm}f_{Q}(q) {\text d}q {\text d}w  \underbrace{\int_{v =0}^{\Lambda_R} f_V(v){\text d}v}_{\text{$\Theta_1$}} \nonumber\\
& = \lambda_w \int_{0}^{\infty} \hspace{-0.15cm} e^{-\lambda_w w} {\text d}w  - \lambda_w e^{-\lambda_q \left(\mathcal{O}_\Theta +\mathcal{T} \right)} \int_{0}^{\infty} \hspace{-0.15cm} e^{-w \left(\lambda_w + \lambda_q \mathcal{S}_\Theta \right)} {\text d}w \nonumber\\
& = \left(1 - e^{-\lambda_v \Lambda_R} \right) \left(1- \frac{\lambda_w e^{-\lambda_q \left(\mathcal{O}_\Theta + \mathcal{T} \right) }}{\lambda_w + \lambda_q \mathcal{S}_\Theta } \right).
\end{align}}Further, we extend the term $\Phi$ in \eqref{CDF_gammabj}	as
\begin{align}
\label{Phi_cont}
\Phi&= \int_{w=0}^{\infty} f_W(w) \underbrace{\int_{v=\Lambda_R}^{\infty} \int_{q=0}^{vw \mathcal{S}_\Phi + v \mathcal{O}_\Upsilon + \mathcal{L}}  f_V(v) f_Q(q){\text d}q {\text d}v}_{\Phi_1}  {\text d} w,
\end{align}
where $\Phi_1$ can be derived as
\begin{align}
\label{Phi_1}
\Phi_1 &= \lambda_v \int_{y=\Lambda_R}^{\infty} e^{-\lambda_v v} \left(1 - e^{-\lambda_q \left(w v \mathcal{S}_\Phi + v \mathcal{O}_\Phi + \mathcal{T} \right)} \right)  \text{d} v \nonumber\\
&=  e^{-\lambda_v \Lambda_R} - \frac{\lambda_v e^{-\lambda_q \mathcal{T}}~ e^{-\Lambda_R \left(w \lambda_q  \mathcal{S}_\Phi + \lambda_q \mathcal{O}_\Phi+\lambda_v \right) } }{w \lambda_q  \mathcal{S}_\Phi + \lambda_q \mathcal{O}_\Phi+\lambda_v }.
\end{align}

Then, by inserting \eqref{Phi_1} into \eqref{Phi_cont}, we can solve $\Phi$ as
\begin{align}
\label{Phi_final}
\Phi = e^{-\lambda_v \Lambda_R} + \frac{  e^{-\Lambda_R  \left( \lambda_v + \lambda_q \mathcal{O}_\Phi \right) - \lambda_q \mathcal{T} } }{ \left( \lambda_v \lambda_w \right)^{-1}  \lambda_q \mathcal{S}_\Phi} 
\left(e^{\xi^{}_R \mu^{}_R} \right) \text{Ei} \left[- \mu^{}_R \xi^{}_R \right], 
\end{align}		
where $\mu^{}_R = \lambda_w +\lambda_q \Lambda_R \mathcal{S}^{}_\Phi$ and $\xi^{}_R = \frac{ \mathcal{O}^{}_\Phi}{ \mathcal{S}^{}_\Phi} + \frac{\lambda_v}{\lambda_q \mathcal{S}^{}_\Phi}$. 	
Finally, after inserting \eqref{Theta} and \eqref{Phi_final} into \eqref{CDF_gammabj}, the closed-form solution for the CDF of $\gamma_{b,j}$ can be found as in \eqref{CDF_gammabj_final}, where $\psi^{}_j <\frac{\beta^{}_b}{\mathcal{J}_b}$, otherwise, $F_{\gamma^{}_{b,j}} (\psi^{}_j) \sim 1$.  \qed

\balance
\bibliographystyle{ieeetr}
\bibliography{mybibfile}
\end{document}